\newtheorem{example}{Example}[section]
   \def\marginparright{\@mparswitchfalse}
   \def\marginparoutside{\@mparswitchtrue}
\def\comment#1{#1}
\def\withcomments{
  \setlength{\marginparwidth}{0.3in}
  \addtolength{\oddsidemargin}{0in}
  \addtolength{\evensidemargin}{0in}
  \newcounter{mycommentcounter}
  \marginparright
  \def\comment##1{\refstepcounter{mycommentcounter}%
    \ifhmode%
    \unskip%
    {\dimen1=\baselineskip \divide\dimen1 by 2 %
      \raise\dimen1\llap{\tiny {\bf \color{red}*-\themycommentcounter-*}}}\fi%
    \marginpar{\renewcommand{\baselinestretch}{0.8}%
      \footnotesize [\themycommentcounter]: \raggedright ##1}}
  \date{\framebox{Draft of \today}}
}
\newcommand{\ie}{i.e.,\xspace}
\newcommand{\eg}{\emph{e.g.}, \xspace}
\newcommand{\cbot}{\bot}
\newcommand{\ctop}{\top}
\newcommand{\ctimes}{\ensuremath{\times}}
\newcommand{\cplus}{\ensuremath{+}}
\newcommand{\isequente}[2]{\protect{\frac{\displaystyle{#1}}{\displaystyle{#2}}}}
\newcommand{\arco}[1]{\stackrel{#1}{\rightarrow}}
\newcommand{\lang}{\mathcal{T}}
\newcommand{\weight}[1]{|#1|}
\newcommand{\lab}[1]{l(#1)}
\newtheorem{theorem}{Theorem}[section]
\newtheorem{definition}{Definition}[section]
\newtheorem{proposition}{Proposition}[section]
\newcommand{\taglia}[1]{}
\newcommand{\eval}[1]{\llbracket #1 \rrbracket}
\newcommand{\evalstrong}[1]{\llbracket\hspace{-0.06cm}[ #1 \rrbracket\hspace{-0.06cm}]}
\def\0{{\mathbf 0}}
\def\1{{\mathbf 1}}
\newcommand{\xRightarrow}[2][]{\ext@arrow 0359\Rightarrowfill@{#1}{#2}}
\begin{document}

\title{Semiring-based Specification Approaches for\\ Quantitative Security\thanks{This work has been partially supported by  PRIN 2010XSEMLC ``Security Horizons'' and ARTEMIS J.U.~SESAMO.}}


\author{Fabio Martinelli 
\institute{IIT-CNR, Pisa, Italy}
\email{fabio.martinelli@iit.cnr.it} \and
Ilaria Matteucci
\institute{IIT-CNR, Pisa, Italy}
\email{ilaria.matteucci@iit.cnr.it}
\and
Francesco Santini
\institute{IIT-CNR, Pisa, Italy}
\email{francesco.santini@iit.cnr.it}
}
\def\titlerunning{Quantitative-security Specification}
\def\authorrunning{F. Martinelli, I. Matteucci \& F. Santini}

\maketitle

\begin{abstract}
Our goal is to provide different semiring-based formal tools for the specification of security requirements: we quantitatively enhance the \emph{open-system} approach, according to which a system is partially specified. Therefore, we suppose the existence of an unknown and possibly malicious agent that interacts in parallel with the system. Two specification frameworks are designed along two different (but still related) lines. First, by comparing the behaviour of a system with the expected one, or by checking if such system satisfies some security requirements: we investigate a novel approximate behavioural-equivalence for comparing processes behaviour, thus extending the \emph{Generalised Non Deducibility on Composition} (GNDC) approach with scores. As a second result, we equip a modal logic with \emph{semiring} values with the purpose to have a weight related to the satisfaction of a formula that specifies some requested property. Finally, we generalise the classical partial model-checking function, and we name it as  \emph{quantitative partial model-checking} in such a way  to  point out the necessary and sufficient conditions that a system has to satisfy in order to be considered as secure, with respect to a fixed security/functionality threshold-value. 
\end{abstract}
\section{Introduction}\label{sec:introduction}
The considerable amount of trust and decentralisation, coming with today's software systems, demands for a rigorous security analysis. Unfortunately, security is frequently in conflict with the functionality and performance requirements  of a system, making $100$\% security an impossible or overly expensive goal to be accomplished. For instance, non-functional requirements add to the picture costs, execution times, and rates. Therefore, the relevant question is not whether a system is secure, but rather how much security it provides under such ``soft'' constraints. Instead of a plain yes/no answer, quantitative levels of security can express different degrees of protection, and allow a security expert to reason about the trade-off between security and conflicting requirements (\eg~on performance). Quantitative security analysis~\cite{dagstuhl} has been already applied, \eg~to name a few, for quantifying the side-channel leakage in cryptographic algorithms, for capturing the loss of privacy in statistical data analysis or information flows, and for quantifying security in anonymity networks.

Improving a quantitative security-analysis requires different tools for the rigorous development of practical systems, and an extended formal foundation for the management of security risks. Here we focus on the latter task. 
The goal of this paper is to move from a qualitative interpretation of security to a quantitative one.
The basic ingredients in our ``recipe'' are c-semirings~\cite{jacm97,gadducci06} (or simply ``semirings'' in the following) and the \emph{Generalised Process Algebra} (\emph{GPA})~\cite{Buchholz}, a quantitative process-algebra where actions are labelled with a value taken from a semiring. Therefore, we use GPA to model processes with quantitative aspects: different semiring instantiations  can parametrically model different cost-metrics. In order to formalise security-properties of GPA processes, we provide two different approaches.

The {\bf first approach} consists in providing several definitions of quantitative behavioural-equivalencies in such a way to extend with quantities the family of security properties that can be expressed in \emph{Generalised Non Deducibility on Composition}  (\emph{GNDC})~\cite{FocardiM99}. The GNDC schema is a uniform approach for defining security properties derived from the \emph{Non Deducibility on Composition} (NDC) properties~\cite{NI,683116}. 
The GNDC scheme  uniformly  expresses many security properties as, \eg fault tolerance properties (\emph{fail stop, fail silent, fail safe} and \emph{fault tolerant} behaviour, \eg~\cite{GLMM09}) or, also, many security properties of cryptographic protocols as, \eg \emph{secrecy}, \emph{authentication}, \emph{integrity}, etc.~\cite{DBLP:conf/fosad/FocardiGM02}. Hence, we formalise the system through quantitative observational relations. We introduce the notion of \emph{quantitative trace-equivalence}, and we recall the definition of \emph{quantitative bisimulation} given in \cite{MiculanP13}. Furthermore, we extend both these relations by considering an approximate version of them: the $\epsilon$-equivalence. By using these equivalence relations, we can compare and specify different security properties, as a quantitative extension of NDC and bisimulation-based NDC properties (BNDC)~\cite{NI,683116}.

In the {\bf second approach} we present in this paper,  we first introduce a semiring-based extension of the classical \emph{Hennessy-Milner Logic} (named \emph{c-HM} Logic) as a means to quantitatively measure the satisfaction of a given formula: its truth value can now be not only true/false, but a numeric value as well (\eg $50$\% or $3$\euro). Note that by exploiting the boolean semiring (\ie~$\langle \{\mathit{false},\mathit{true}\}, \vee, \wedge, \mathit{false}, \mathit{true}\rangle$) we can still enforce yes/no only requirements.
Hence, we use c-HM Logic in the frame of \emph{Partial Model Checking} (\emph{PMC})~\cite{Andersen}. Classical \emph{Model Checking} (\emph{MC}) involves using verification tools to exhaustively search in a process/protocol specification for all the execution sequences with some desired properties. PMC focuses this verification on part of a system only: the main advantage is to perform a full  analysis while avoiding the combinatorial explosion of the state space.
In security, the PMC function has been often used to point out necessary and sufficient constraints on the unspecified/unknown part of a system that is supposed to show a malicious behaviour. Hence,  a controller program is required to ensure the correct behaviour of the whole system, comprehensive of the attacker~\cite{DBLP:journals/entcs/MartinelliM07}. 
In a quantitative scenario, we associate the notion of satisfiability of a logic formula  with the security/functionality level of a system. Once 
we set a \emph{satisfiability threshold} $t \in K$, if the system quantitatively satisfies a security requirement $\phi$ with a value $k$ worse than $t$, then we can state that the investigated system is not quantitatively secure. 


The paper is structured as follows. In Sec.~\ref{sec:background} we recall c-semiring algebraic structures and GPAs. In Sec.~\ref{sec:qgndc} we introduce our first approach, which aims at comparing  a system behaviour with the expected one: we adopt both trace and bisimulation equivalence. Hence, we rephrase them as approximate relations, in order to include  ``close''-enough processes, where close is related to a threshold-score $\epsilon$. In this way, we are able to specify some security aspects formalised as a quantitative  GNDC schema. In Sec.~\ref{sec:logic} we describe security properties via  a semiring-based modal logic (\ie c-HM), and in Sec.~\ref{sec:qpmc} we define a QPMC function with the purpose to point out the necessary and sufficient conditions each subsystem has to satisfy for guaranteeing such requirements. 
Finally, Sec.~\ref{sec:related}  summarises the related work in literature, and Sec.~\ref{sec:conclusion} wraps up the paper with conclusions and proposes some future work.

\vspace{-0.5cm}
\section{Background}\label{sec:background}
In this section we recall the necessary fundamental notions about c-semirings~\cite{jacm97,gadducci06} and \emph{Generalised Process Algebra}~\cite{Buchholz}, a quantitative process-algebra based on semirings.
\subsection{Semirings}\label{sec:semirings}
\begin{definition}[semiring~\cite{golan}]
A commutative semiring is a five-tuple $\mathbb K= \langle K,\cplus,\ctimes,\cbot,\ctop \rangle$  such that $K$ is a set, $\ctop, \cbot \in K$, and $\cplus, \times : K \times K \rightarrow K$
are binary operators making the triples $\langle K, \cplus, \cbot \rangle$ and $\langle K, \ctimes, \ctop \rangle$
commutative monoids (semigroups with identity), satisfying
\begin{itemize}
\item (distributivity) $\forall a, b, c \in K.a \ctimes (b + c) = (a \ctimes b) + (a \ctimes c)$.
\item (annihilator) $\forall a \in A.a \ctimes \cbot = \cbot$.
\end{itemize}
\end{definition}
\begin{proposition}[absorptive semirings~{\cite{golan}}]
Let $\mathbb K$ be a commutative semiring. Then these two properties are equivalent:
\begin{itemize}
\item (absorptiveness) $\forall a,b \in K. a\cplus(a\ctimes b) = a$.
\item ($\ctop$ absorbing element of $+$) $\forall a \in K. a \cplus \ctop = \ctop$.
\end{itemize}
\end{proposition}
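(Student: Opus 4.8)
The plan is to prove the equivalence by establishing the two implications separately, in each case relying only on the monoid unit laws (that $\ctop$ is the identity of $\ctimes$), the commutativity of $\cplus$, and distributivity, all of which are guaranteed by $\mathbb{K}$ being a commutative semiring.

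For the direction from \emph{absorptiveness} to the \emph{absorbing} property, I would simply instantiate the absorptiveness law at $a = \ctop$. This yields $\ctop \cplus (\ctop \ctimes b) = \ctop$, and since $\ctop$ is the multiplicative identity we have $\ctop \ctimes b = b$, so $\ctop \cplus b = \ctop$ for every $b \in K$. Commutativity of $\cplus$ then gives $b \cplus \ctop = \ctop$, which is exactly the second property (after renaming $b$ to $a$).

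For the converse, assuming $a \cplus \ctop = \ctop$ for all $a \in K$, I would compute the product $a \ctimes (b \cplus \ctop)$ in two different ways. On the one hand, since $b \cplus \ctop = \ctop$ by hypothesis (and commutativity), we get $a \ctimes (b \cplus \ctop) = a \ctimes \ctop = a$, again using that $\ctop$ is the multiplicative identity. On the other hand, distributivity gives $a \ctimes (b \cplus \ctop) = (a \ctimes b) \cplus (a \ctimes \ctop) = (a \ctimes b) \cplus a$. Equating the two evaluations and applying commutativity of $\cplus$ delivers $a \cplus (a \ctimes b) = a$, i.e.\ the absorptiveness law.

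The argument is essentially routine, so I do not anticipate a serious obstacle; the only real content is spotting the right specialisation in each direction — instantiating $a = \ctop$ for the forward implication, and factoring through the product $a \ctimes (b \cplus \ctop)$ for the converse — so that distributivity and the unit law for $\ctimes$ can be brought to bear.
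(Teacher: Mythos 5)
Your proof is correct: both directions use only the multiplicative unit law for $\ctop$, commutativity of $\cplus$, and distributivity, all available in a commutative semiring, and each step checks out. The paper itself gives no proof (it cites the result from Golan), and your argument --- instantiating absorptiveness at $a = \ctop$ for one direction and expanding $a \ctimes (b \cplus \ctop)$ by distributivity for the other --- is exactly the standard one, so there is nothing further to reconcile.
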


Absorptive semirings are referred also as \emph{simple}, and their $+$ operator is necessarily idempotent~\cite[Ch.~1,\;pp.~14]{golan}. Semirings where $+$ is idempotent are called as \emph{dioids}.

\begin{definition}[c-semiring~\cite{jacm97}] 
C-semirings are commutative and absorptive semirings. Therefore, c-semirings are dioids where $\ctop$ is an absorbing element for $+$.
\end{definition}
The idempotency of $\cplus$ leads to the definition of  a partial ordering $\leq_K$ over the set $K$ ($K$ is a poset). Such partial order is defined as $a \leq_K b$ if and only if $a+b = b$, and $\cplus$ becomes the \emph{least upper bound} ($\mathit{lub}$, or $\sqcup$) of the lattice $\langle K, \leq_K\rangle$. This  intuitively means that $b$ is ``better'' than $a$.  As a consequence, we can use $\cplus$ as an optimisation operator and always choose the best available solution.

Some more properties can be derived on c-semirings~\cite{jacm97}: \emph{i)} both $\cplus$ and $\ctimes$ are monotone over $\leq_K$, \emph{ii)} $\ctimes$ is intensive (\ie~$a \ctimes b \leq_K a$), iii) $\ctimes$ is closed (\ie~$a \ctimes b \in K$), and  \emph{iv)} $\langle K, \leq_K\rangle$ is a complete lattice. $\cbot$ and $\ctop$ are respectively the bottom and top elements of such lattice.  When also $\ctimes$ is idempotent, \emph{i)} $\cplus$ distributes over $\ctimes$, \emph{ii)} $\ctimes$ is the \emph{greater lower bound} ($\mathit{glb}$, or $\sqcap$) of the lattice, and \emph{iii)} $\langle K, \leq_K\rangle$ is a distributive lattice.

Semirings and c-semirings have been  often adopted  in Computer Science and Operation Research as a very simple but very expressive optimisation structure~\cite{csop}. Some c-semiring instances are:
%
\emph{boolean} $\langle \{\mathit{F},\mathit{T}\}, \vee,$ $\wedge, \mathit{F}, \mathit{T}\rangle$\footnote{Boolean c-semirings can be used to model crisp problems.},
\emph{fuzzy} $\langle [0,1],$ $ \max, \min, 0, 1 \rangle$, 
\emph{bottleneck}  $\langle \mathbb{R}^+ \cup\{+\infty\}, $ $ \max, \min, 0, \infty \rangle$,
\emph{probabilistic} $\langle [0,1], \max, \hat{\times}, 0, 1 \rangle$ (known as the Viterbi semiring), 
\emph{tropical} $\langle \mathbb{N} \cup\{+\infty\}, min, \hat{+}, +\infty, 0 \rangle$. Capped operators stand for their arithmetic equivalent.

Although c-semirings  have been historically
used as monotonic structures where to aggregate costs (and find best solutions), the need of removing values has raised in local consistency algorithms and non-monotonic algebras using constraints (eg~\cite{gadducci06}).
A solution comes from \emph{residuation theory}~\cite{residuation},
a standard tool on tropical arithmetics  that allows for obtaining a division operator
via an approximate solution to the equation $b \ctimes x = a$.

\begin{definition}[division~\cite{gadducci06}]
Let $\mathbb K$ be a tropical semiring. Then, $\mathbb K$ is residuated if
the set $\{x \in K \mid b \ctimes x \leq a\}$ admits a maximum for all elements
$a, b \in K$, denoted as $a \div b$.
\end{definition}

Since a complete\footnote{$\mathbb K$ is complete if it
is closed with respect to infinite sums, and the distributivity law holds
also for an infinite number of summands~\cite{gadducci06}.} dioid is also residuated, we have that all the classical instances of a c-semiring presented above are residuated, \ie each element in $K$ admits an ``inverse'', which is unique in case $\leq_K$ is a total order. For instance, the unique ``inverse'' $a \div b$ in the weighted semiring is defined as follows:\\
$ a \div b
= \min \{x \mid b \hat{+} x \geq a\}=
\begin{cases}
0 & \text{if $b\geq a$}\\
a \hat{-} b & \text{if $a > b$}
\end{cases}
$
\begin{definition}[unique invertibility~\cite{gadducci06}] Let $\mathbb K$ be an absorptive, invertible semiring. Then, $\mathbb K$
is uniquely invertible iff it is cancellative, \ie $\forall a, b, c \in A.(a \times c = b \times c) \wedge (c \not= 0) \Rightarrow a = b$.
\end{definition}

Note that since all the previously listed semirings (\eg tropical and fuzzy) are cancellative, they are uniquely invertible as well.
Furthermore, it is also possible to consider several optimisation criteria at the same time: the cartesian product of semirings is still a semiring. Clearly, in this case the ordering induced by $+$ is partial, \eg when  we have $\langle k_1, k_2\rangle$ and $\langle k_3, k_4 \rangle$, and $k_1 \leq k_3$ while $k_2 \geq k_4$.

%

\subsection{Generalised Process Algebra}\label{sec:GPA}
In a \emph{quantitative process}, observable transitions are labelled with some value associated with a step in the behaviour of a system. In GPA~\cite{Buchholz} the authors use semirings to model two fundamental modes of composing observable behaviour, either by combination of different traces, or by sequential composition. 
%
Process algebras are simple languages with precise mathematical semantics, tailored to exhibit and study specific features of computation. Typically, a \emph{process} $P$, specified by some syntax, may non-deterministically execute several \emph{labelled transitions} of the form $P \arco{a} P'$, where $a$ is an observable effect and $P'$ is a new process. In quantitative process algebras, transitions are labelled by pairs $(a,k)$ where $k$ is a quantity associated to the effect $a$: thus, $P \xrightarrow{(a,k)} P'$. 

We define transition systems where transitions are labelled with
symbols from a finite alphabet and from a semiring $\mathbb K$. The semantics of a GPA process $P$ is \emph{Multi Labelled Transition System} (\emph{MLTS})~\cite{Buchholz}:

\begin{definition}[MLTS]\label{def:MLTS} A (finite) Multi Labelled Transition System (MLTS) is a five-tuple
$\mathit{MLTS} = (S, \mathit{Act}, \mathbb{K}, T, s_0)$, where $S$ is the countable (finite) state space, $s_0 \in S$ is the initial state,\footnote{We simplify the original  definition of MLTS given in~\cite{Buchholz}, where an \emph{initialization} function is  taken into account to assign a quantitative valuation to each of the $n$ initial states (here we only have one $s_0$).}
$\mathit{Act}$ is a finite set of transition labels, $\mathbb{K}$ is a semiring used for the definition
of transition costs, and $T : (S \times \mathit{Act}  \times S) \longrightarrow \mathbb{K}$ is the transition function.
\end{definition}
%
%
\begin{definition}[GPA syntax~\cite{Buchholz}]
The set $\mathcal{P}$ of \emph{agents}, or processes, in GPA over a countable set of transition labels Act and a semiring $\mathbb{K}$ is defined by the grammar
\small$$
P ::= 0 \mid (a,k).P \mid P + P \mid P \Vert_{A}\, P \mid P\backslash A \mid P \slash A
\mid X \mid X \triangleq P
$$
where $a \in Act$, $A\subseteq Act\backslash \{\tau\}$ is a subset of actions, $k \in K$ (the set of values in a semiring $\mathbb{K}$), and $X$ belongs to a countable set of \emph{process variables}, coming from a system of co-recursive equations of the form $X \triangleq P$, meaning that $X$ behaves like $P$. 
$GPA(\mathbb{K})$ denotes the set of GPA processes labelled with weights in $\mathbb{K}$. 
\end{definition}

The formal operational semantics of GPA operators is given in Tab.~\ref{GPASOS}, Informally, process $0$ describes inaction or termination;
$(a,k).P$ performs $a$ with \emph{value} $k$ and evolves into $P$; $P + P'$ non deterministically behaves as either $P$ or $P'$;
$P \Vert_{A}\, P'$ describes the process in which $P$ and $P'$ proceed concurrently  when they perform actions belonging to $A$, and independently on all the other actions; $P \backslash A$ expresses the fact that actions from the set $A$ are hidden, \ie they become $\tau$ actions that
are no longer usable in joint actions with an environment; the dual, \ie $P \slash A$, restricts the behaviour of P by allowing it to perform only actions not in $A$.
\begin{table}[t] \hrulefill
\begin{center}
{\small
  \begin{tabular}{ccc}
     $ \isequente{}{(a,k).P\arco{a,k} P}$ &       $ \isequente{P\xrightarrow{(a,k)} P_{1}\quad P' \xrightarrow{(a,l)}P'_{1}} {P\Vert_{A}\, P' \xrightarrow{(a,k \ctimes l)}P_{1}\Vert_{A}\, P'_{1}} \; a \in A$  &
 $\isequente{P\xrightarrow{(a,k)} P_{1}}{X\xrightarrow{(a,k)} P_{1}}{X\triangleq P}$ \\ \\
   $ \isequente{P\xrightarrow{(a,k)} P_{1}}{P \Vert_A, P'\xrightarrow{(a,k)} P_{1}\Vert_A\, P'}\; a \not \in A$ &
      $\isequente{P_{j}\xrightarrow{(a,k)} P_{1}}{\sum_{i\in I} P_{i}\xrightarrow{(a, k_{\Sigma})} P_{1}}{j \in I \;}$  &
     $ \isequente{P'\xrightarrow{(a,k)}P'_{1}}{P \Vert_A\, P'\xrightarrow{(a,k)} P\Vert_A\, P'_{1}} \; a \not \in A$ \\  
     $\isequente{P'\xrightarrow{(a,k)}P'_{1}}{P \backslash A\, \xrightarrow{(a,k)} \, P'_{1}\backslash A} \; a \not \in A$ & $\isequente{P\xrightarrow{(a_{1},k_{1})}P' \ldots P\xrightarrow{(a_{n},k_{n})}P' }{P \backslash A\, \xrightarrow{(\tau,k_{\tau})} \, P'\backslash A } \; \{a_{1}, \ldots a_{n}\}  \subseteq A \cup \{\tau\}$ & $\isequente{P'\xrightarrow{(a,k)}P'_{1}}{P \slash A\, \xrightarrow{(a,k)} \, P'_{1}\slash A } \; a \not \in A$
  \end{tabular}}
 \end{center}
  \hrulefill
  \caption{An operational semantics for $GPA$~\cite{Buchholz}, where $k_{\Sigma} = \sum_{i\in I}(P_{i} \arco{a}P_{1})$ and $k_{\tau}= \sum_{i=1}^{n}(k_{i})$.}\label{GPASOS}
\end{table}

Given a GPA process $P$, the set of \emph{derivatives} of a $P$ is defined as $Der(P)=\{P' \mid P \rightarrow^* P'\}$ where $\rightarrow^{*}$ is $\cup_{a\in Act, k \in K} \arco{a,k}$; $Sort(P)$ denotes the set of actions names that syntactically appear in $P$ regardless their values.

Being $a_{1},\ldots,  a_{n} \in Act$, a \emph{trace} is a sequence $(a_1,k_1)$ $\cdots(a_n,k_n)$ leading from process $P$ to process $Q$. We call $\lang(P)$ the set of traces rooted in $P$.
Given a trace $(a_1,k_1)$ $\cdots (a_n,k_n)$, we define its \emph{label} $\lab{t}=a_1 \cdots a_n$, and its \emph{weak run-weight} $\weight{t} =  k_1 \ctimes \ldots \ctimes k_n \in K$ (where $\times$ comes from a semiring $\mathbb{K}$). 
We also define the \emph{strong run-weight} $\Vert{t}\Vert$ of a trace,  as the weak-run weight without the weights of $\tau$ actions.

Hence, it is possible to \emph{evaluate} the whole behaviour of a process. The valuation of the $0$ process is equal to $\ctop$. We consider processes different form $0$ as evaluated in the \emph{optimistic} way, \ie their evaluation  coincides with the value of their best trace(s). Formally, given a process $P \neq 0$, the \emph{weak evaluation-value} is computed as $$\eval{P} = \sum\limits^{\mathbb K}_{\{t \in \lang(P)\}} \weight{t},$$ where $\sum\limits^{\mathbb K}$ is the set-wise version of the $+$ operator in $\mathbb K$. The \emph{strong evaluation-value} is  computed as $$\evalstrong{P} = \sum\limits^{\mathbb K}_{\{t \in \lang(P)\}} \Vert t \Vert.$$ 



\section{Quantitative Generalized Non Deducibility on Composition}\label{sec:qgndc}

The GNDC schema is a uniform approach for defining several security properties based on the compositionality nature of the process algebra formalism. It has been introduced in~\cite{FocardiM99} to express security properties in a qualitative way. Hereafter, we extend that definition in order to express, in a uniform way, quantitative security properties. Therefore, what we achieve is to be able to quantitatively compare the behaviour of two GPA processes, according to possible different definitions of quantitative behavioural relations (\eg a weighted trace-equivalence relation). 

Hence, we have  a quantitative version of the GNDC schema, hereafter denoted as QGNDC, given in terms of GPA:
\begin{equation} \label{start}
P \in QGNDC_{\triangleleft}^{\alpha, \mathbb{K}} \; \mbox{iff}\; \forall E
\in \mathcal{E}_{H} : (P \Vert_{H} X) \backslash H
\triangleleft_{\mathbb{K}} \alpha(P)
\end{equation}
where $H \subseteq Act\backslash \{\tau\}$ is the set of environmental actions,
$\mathcal{E}_{H}$ is the set of environments, $\triangleleft_{\mathbb{K}} \in \mathcal{P} \times \mathcal{P}$
is a relation between two processes, whose definition depends on the partial order of the semiring $\mathbb{K}$ according to which the processes are quantified and evaluated, and $\alpha: \mathcal{P}
\rightarrow \mathcal{P}$ is a function between processes. The $\Vert_{H}$ is the
synchronisation operator stating that all actions in $H$ are
performed by the system if and only if both $P$ and $E$
perform them, and the $\backslash H$ is the hiding operator that hides all actions in $H$. \\
Informally, the $GNDC_{\triangleleft}^{\alpha, \mathbb{K}}$ property requires that the behaviour of  process $P$, once it is composed with any possible environment $E \in \mathcal{E}_{H}$, is \emph{compliant} with the system expected-behaviour, described by the function $\alpha$. 
The notion of compliance depends on the $\triangleleft_{\mathbb{K}}$ relation we select for comparing the behaviours of $(P \Vert_{H} X)\backslash H$ and $\alpha(P)$, according not only to an observational equivalence (as in the qualitative approach~\cite{FocardiM99}), but also with respect to order induced by the semiring $\mathbb{K}$.

In the following  we provide several definitions of quantitative behavioural-equivalence according to which we are able to specify weighted  properties through the QGNDC schema~\cite{FocardiM99}.
Furthermore, we compare the expressive power of the different equivalence-relations we define.

\subsection{Quantitative Trace-equivalences}\label{sec:tracestuff}
One of the basic notions used in the literature to compare processes behaviour is the notion of \emph{trace}: 
two processes are equivalent if they exactly show the same execution sequences, ands their evaluation scores are comparable in the semiring partial-order.
In order to formally define traces, we need a transition relation that does not consider internal moves, denoted by $\tau$. We start by highlighting such $\tau$-actions in execution traces:
%

\begin{definition}[weighted weak-trace]
The notation $P\xRightarrow{(a,k)}P'$ is a shorthand for $P\xrightarrow{\smash{(\tau,k_{\tau})}}^* P_{\tau} \xrightarrow{\smash{(a,k)}}P'_{\tau} \xrightarrow{\smash{(\tau,k'_{\tau})}}^*P'$, where a (possibly empty) sequence of $\tau$ labeled transitions is denoted by $\xrightarrow{\smash{(\tau,k_{\tau})}}^*$. 
A \emph{weighted weak-trace}   $\gamma= (a_1,k_1) \ldots(a_n,k_n) \in (Act \backslash \{\tau\})^*$ is such that $P \xRightarrow{\; \; \; \gamma \; \; \;} P'$ if and only if  there exist $P_{1}, \ldots, P_{n-1} \in GPA$ such that $P \xRightarrow{(a_1,k_1)}P_{1} \ldots P_{n-1} \xRightarrow{(a_n,k_n)}P'$.
\end{definition}

We can now define an equivalence relation based on trace similarity, \ie  the \emph{weak-trace equivalence} ($\approx_{wtrace}$) in Def.~\ref{def:tracerel}. We require both the strong evaluation-score and the weak evaluation-score of two processes to be equal, or not comparable:

\begin{definition}[weak-trace equivalence]\label{def:tracerel}
For any $P \in {\cal P}$ the set $\hat{\mathcal{T}}(P)$ of {\em weighted weak-traces associated with \/} $P$ is $\hat{\mathcal{T}}(P) = \{\gamma \in (Act \backslash \{\tau\})^*\; \mid\; \exists
P'\;:\; P \xRightarrow{\; \; \gamma \; \;} P'\}$, where $(Act \backslash \{\tau\})^*$ is the set of sequences of actions.
$P$ and $Q$ are {\em weak-trace equivalent\/}
(notation $P \approx_{wtrace} Q$) if and only if 
all the following three conditions hold:
\begin{enumerate}
\item $\hat{\mathcal{T}}(P) = \hat{\mathcal{T}}(Q)$,
\item $\evalstrong{P} \not\lessgtr_\mathbb{K}  \evalstrong{Q}$,\footnote{In the following we will use $\not\lessgtr_\mathbb{K}$ as a shortcut to denote when 
two semiring values are equal or  not comparable in the poset.}
 and 
\item $\eval{P} \not\lessgtr_\mathbb{K}  \eval{Q}$.\end{enumerate} 
\end{definition}
%


Note that, the first two conditions are related to the observable traces of $P$ and $Q$, while condition 3 allows us to compare the specific contribution of the $\tau$-actions in terms of weight. In the following, we provide an approximate version of weak-trace equivalence, \ie the \emph{$\epsilon$-trace relation}. With respect to Def.~\ref{def:tracerel}, we allow the weak evaluation-score of two processes to differ up to a threshold-value $\epsilon \in K$.  

\begin{definition}[$\epsilon$-trace equivalence]\label{def:epsilon-trace}
For any $P \in {\cal P}$ the set $\hat{\mathcal{T}}(P)$ of {\em weighted weak-traces associated with \/} $P$ is $\hat{\mathcal{T}}(P) = \{\gamma \in (Act \backslash \{\tau\})^*\; \mid\; \exists
P'\;:\; P \xRightarrow{\; \; \gamma \; \;} P'\}$, where $(Act \backslash \{\tau\})^*$ is the set of sequences of actions.
$P$ and $Q$ are {\em $\epsilon$-trace equivalent\/}
(notation $P \approx_{\epsilon-trace} Q$) if and only if there exists a value $\epsilon$ such that 
all the following three conditions hold:
\begin{enumerate}
\item $\hat{\mathcal{T}}(P) = \hat{\mathcal{T}}(Q)$,
\item $\evalstrong{P} \not\lessgtr_\mathbb{K}  \eval{Q}$, and 
\item $\eval{P} \div \epsilon \geq_\mathbb{K} \eval{Q}  \wedge \eval{Q}  \div \epsilon \geq_\mathbb{K} \eval{P} $.
\end{enumerate} 

\end{definition}
%

%
%
%
%
%
%
%
%
%
%
%
%
%
%

These relations are comparable one to another. In particular, the following proposition holds.
\begin{proposition}\label{prop:trace}
For each couple of processes $P, Q \in GPA$. The following statement holds
$$
\forall \epsilon \in K, \quad P \approx_{wtrace} Q \; \Rightarrow  P\approx_{\epsilon-trace} Q
$$
Note that when $\epsilon = \ctop$ we have $P \approx_{wtrace} Q \; \Leftrightarrow  P\approx_{\epsilon-trace} Q$.
\end{proposition}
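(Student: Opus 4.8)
\noindent\emph{Proof plan.} Reading $\approx_{\epsilon-trace}$ as parametrised by the fixed threshold $\epsilon$ (so that the proposition's ``$\forall \epsilon$'' ranges over the choices of that threshold), the plan is to fix an arbitrary $\epsilon \in K$, assume $P \approx_{wtrace} Q$, and verify the three defining conditions of Def.~\ref{def:epsilon-trace} one at a time, using only the algebraic properties of $\mathbb{K}$ recalled in Sec.~\ref{sec:semirings}. The first condition is immediate: $\hat{\mathcal{T}}(P) = \hat{\mathcal{T}}(Q)$ appears verbatim in both Def.~\ref{def:tracerel} and Def.~\ref{def:epsilon-trace}, so it transfers with no work.

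The heart of the argument is condition~(3). First I would establish the auxiliary fact that $a \div \epsilon \geq_\mathbb{K} a$ for all $a,\epsilon \in K$. This follows from intensivity of $\ctimes$: since $\epsilon \ctimes a \leq_\mathbb{K} a$, the element $a$ lies in the set $\{x \mid \epsilon \ctimes x \leq_\mathbb{K} a\}$, so its maximum $a \div \epsilon$ dominates $a$. Applying this to $a = \eval{P}$ and $a = \eval{Q}$ and combining with the equality branch of condition~(3) of $\approx_{wtrace}$ (that is, $\eval{P} \not\lessgtr_\mathbb{K} \eval{Q}$ in the case $\eval{P}=\eval{Q}$) yields $\eval{P} \div \epsilon \geq_\mathbb{K} \eval{P} = \eval{Q}$ and symmetrically $\eval{Q} \div \epsilon \geq_\mathbb{K} \eval{Q} = \eval{P}$, which is exactly condition~(3) of $\approx_{\epsilon-trace}$. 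For condition~(2) I would try to bridge the strong/strong incomparability supplied by $\approx_{wtrace}$ to the strong/weak form of Def.~\ref{def:epsilon-trace} via the inequality $\eval{Q} \leq_\mathbb{K} \evalstrong{Q}$: on each trace the $\tau$-weights are extra $\ctimes$-factors, so $\weight{t} \leq_\mathbb{K} \sweight{t}$ by intensivity, and taking the $\cplus$-lub preserves this; making this bridge rigorous is, as noted below, exactly where care is needed.

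For the concluding remark, the case $\epsilon = \ctop$ specialises the auxiliary computation to an equality: since $\ctop$ is the unit of $\ctimes$, we have $\{x \mid \ctop \ctimes x \leq_\mathbb{K} a\} = \{x \mid x \leq_\mathbb{K} a\}$, whose maximum is $a$, so $a \div \ctop = a$. Condition~(3) of $\approx_{\epsilon-trace}$ then collapses to $\eval{P} \geq_\mathbb{K} \eval{Q}$ together with $\eval{Q} \geq_\mathbb{K} \eval{P}$, i.e.\ to $\eval{P} = \eval{Q}$, which is precisely condition~(3) of $\approx_{wtrace}$ read in the comparable case. Supplying the reverse implication at $\epsilon = \ctop$ in this way gives the asserted biconditional.

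The step I expect to be the main obstacle is the treatment of the \emph{incomparable} branch of the $\not\lessgtr_\mathbb{K}$ disjunction. My derivation of condition~(3) exploits the equality $\eval{P} = \eval{Q}$; when the two scores are merely incomparable, $a \div \epsilon \geq_\mathbb{K} a$ no longer delivers $a \div \epsilon \geq_\mathbb{K} \eval{Q}$, because $\geq_\mathbb{K}$ is a genuine partial order. I would therefore have to argue either that the chosen threshold makes $a \div \epsilon$ large enough to dominate both scores (as happens at the other extreme $\epsilon = \cbot$, where $a \div \cbot = \ctop$ by the annihilator law), or that the statement is implicitly read over a total order, where $\not\lessgtr_\mathbb{K}$ reduces to equality. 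Reconciling this incomparable case with the two syntactically different forms of condition~(2) in Def.~\ref{def:tracerel} and Def.~\ref{def:epsilon-trace} is the delicate point on which the whole proof turns.
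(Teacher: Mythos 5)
Your positive steps are sound, and it is worth noting up front that the paper itself states this proposition with \emph{no} proof whatsoever (it passes directly from the statement to Example~\ref{ex:gino}), so your proposal has to stand on its own. Its core does: condition~(1) transfers verbatim; the residuation lemma $a \div \epsilon \geq_\mathbb{K} a$ (intensivity gives $\epsilon \ctimes a \leq_\mathbb{K} a$, so $a$ belongs to the set whose maximum defines $a \div \epsilon$) is the right engine for condition~(3) in the equality case; and $a \div \ctop = a$ is the correct specialisation for the $\epsilon = \ctop$ remark. The problem is that the two points you flag as ``delicate'' are not merely delicate: under a literal reading of Def.~\ref{def:tracerel} and Def.~\ref{def:epsilon-trace} both steps are actually \emph{false}, so no argument can close them. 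For condition~(2), your bridge $\eval{Q} \leq_\mathbb{K} \evalstrong{Q}$ works against you: it manufactures comparability, which is precisely what $\not\lessgtr_\mathbb{K}$ forbids. Concretely, in the tropical semiring take $P = Q = (\tau,1).(a,2).0$: then $P \approx_{wtrace} Q$ trivially, yet $\evalstrong{P} = 2$ and $\eval{Q} = 3$ are comparable and distinct, so condition~(2) of Def.~\ref{def:epsilon-trace} fails for every $\epsilon$ --- not even $P \approx_{\epsilon-trace} P$ holds. The only viable repair is to read condition~(2) of Def.~\ref{def:epsilon-trace} as $\evalstrong{P} \not\lessgtr_\mathbb{K} \evalstrong{Q}$ (evidently a typo in the paper), after which it transfers from $\approx_{wtrace}$ with no work, exactly like condition~(1).

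Likewise, the incomparable branch of condition~(3) is a genuine counterexample, not an obstacle to be circumvented. In the product of two tropical semirings (cartesian products of semirings are explicitly admitted in Sec.~\ref{sec:semirings}), take $P = (a,\langle 1,2\rangle).0$ and $Q = (a,\langle 2,1\rangle).0$: all conditions of $\approx_{wtrace}$ hold because the evaluations $\langle 1,2\rangle$ and $\langle 2,1\rangle$ are incomparable, yet at $\epsilon = \ctop = \langle 0,0\rangle$ one gets $\eval{P} \div \ctop = \langle 1,2\rangle \not\geq_\mathbb{K} \langle 2,1\rangle = \eval{Q}$, so the universally quantified implication fails, and with it the claimed biconditional at $\epsilon = \ctop$. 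Hence the proposition is true only under the reading you tentatively offer at the end: $\not\lessgtr_\mathbb{K}$ collapsed to equality, i.e., over totally ordered semirings (note that every worked example in the paper uses the tropical semiring, which is total). Your proof should therefore not leave these as open worries but adopt them as explicit hypotheses --- condition~(2) read strong-versus-strong, and totality of $\leq_\mathbb{K}$ (equivalently, equality in place of $\not\lessgtr_\mathbb{K}$) --- under which your argument is complete and correct.
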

%

\begin{example}\label{ex:gino}
Consider two processes $P = (\tau,1).(a,3).(b,2)$ and $Q= (a,2).(b,3)$ in the tropical semiring. We have that $P  \approx_{1-trace} Q$  (\ie $\epsilon = 1$), while $P \approx_{wtrace} Q$ does not hold.
\end{example}

Note that $P$ and $Q$ in Ex.~\ref{ex:gino} are qualitatively trace-equivalent according to the classic definition given in \cite{FocardiM99}. Therefore, by considering the weight of traces (\ie weak-trace equivalence) we obtain a more restrictive equivalence-relation. Consequently, we have introduced the notion $\epsilon$-trace equivalence with the purpose to gradually be able to relax it and include more processes in the relation. 

%
%

\subsection{Quantitative Bisimulation Equivalences}\label{sec:qbe}
%

%
%

In this section we focus on the weak-bisimulation equivalence for GPA~\cite{Buchholz,MiculanP13}, since we would like to consider as equivalent the behaviour of two processes regardless the weight of internal action $\tau$ they perform. Differently from~\cite{Buchholz}, where only the definition of strong bisimulation is provided, we assume that each state of a MLTS has a finite number of transitions with a non-$\ctop$ weight. 
In the following, for $\mathcal R$ a relation, we write $P \mathcal R Q$ to say that $(P, Q) \in \mathcal R$. 

We extend the definition of quantitative weak bisimulation in \cite{MiculanP13} by considering a poset of preference values:
\begin{definition}[quantitative weak-bisimulation]\label{def:weak-bis}
An equivalence relation $\mathcal{R}$ on $\mathcal{P} \times\mathcal{P}$ is a \emph{quantitative weak bisimulation} if and only if for all $(P, Q) \in \mathcal{R}$ and all $a \in Act$ and each equivalence class $C \in \mathcal{R}$ we have:
$$
\sum_{D \in C} (P \xRightarrow{(a, k)}D) \not\lessgtr  \sum_{D \in C} (Q \xRightarrow{(a, k')}D), \quad \quad
\sum_{D \in C} (P \xrightarrow{\smash{(\tau, k_\tau)}}^*D) \not\lessgtr \sum_{D \in C} (Q \xrightarrow{\smash{(\tau, k'_\tau)}}^{*}D)
$$
\normalsize
We write $P \approx_\mathbb{K} Q$ whenever there is a bisimulation $\mathcal{R}$ such that $(P, Q) \in \mathcal{R}$.
\end{definition}
Note that the quantitative weak-bisimulation relation holds even if the two values related to $P$ and $Q$ are incomparable in the partial order defined by $+$. In \cite{MiculanP13} they have to exactly correspond to the same value, since partial orders are not considered.

As accomplished in Sec.~\ref{sec:tracestuff},  we define a variant that approximates Def.~\ref{def:weak-bis}, named as \emph{weak $\epsilon$-bisimulation}. 
The intuition behind it, similarly to Sec.~\ref{sec:tracestuff}, is to relax the cost of $\tau$ actions by a threshold-value $\epsilon$ with the purpose to allow two processes to be bismilar (or, better, $\epsilon$-bisimilar) despite this difference. More precisely, such $\epsilon$ value bounds the difference between the cost of $\tau$ actions before and after an action at the same time (see Ex.~\ref{ex:gino2}).

\begin{definition}[weak $\epsilon$-bisimulation]\label{def:epsilon}
An equivalence relation $\mathcal{R}$ on $\mathcal{P} \times\mathcal{P}$ is a \emph{weak $\epsilon$-bisimulation} if and only if, there exists a value $\epsilon$ such that for all $(P, Q) \in \mathcal{R}$ and all $a \in Act$ and each equivalence class $C \in \mathcal{R}$ we have:
\small{$$
\begin{array}{lclcl}
\sum\limits_{D \in C} (P \xRightarrow{(a, k)}D)  \div \epsilon&\geq_\mathbb{K}& \sum\limits_{D \in C} (Q \xrightarrow{(a, k')}D) \; \; \; \wedge \; \; \; \sum\limits_{D \in C} (Q \xRightarrow{(a, k)}D)  \div \epsilon&\geq_\mathbb{K}& \sum\limits_{D \in C} (P \xrightarrow{(a, k')}D) 
\end{array}
$$
}
\small{
$$
\begin{array}{lclcl}
\sum\limits_{D \in C} (P \xrightarrow{\smash{\tau, k_\tau}}^*D) \div \epsilon &\geq_\mathbb{K} & \sum\limits_{D \in C} (Q \xrightarrow{\smash{\tau, k'_\tau}}^*D) \; \; \; \wedge \; \; \; \sum\limits_{D \in C} (Q \xrightarrow{\smash{\tau, k_\tau}}^*D) \div \epsilon &\geq_\mathbb{K}& \sum\limits_{D \in C} (P \xrightarrow{\smash{\tau, k'_\tau}}^*D) 
\end{array}
$$
}
\normalsize
We write $P \approx_{\epsilon} Q$ whenever there is a bisimulation $\mathcal{R}$ such that $(P, Q) \in \mathcal{R}$.
\end{definition}

%

%
%
%
%
%
%

These relations are comparable as follows.
\begin{proposition}\label{prop:bis}
For each couple of processes $P, Q \in GPA$. The following statement holds
$$
\forall \epsilon \in K \quad P \approx_\mathbb{K} Q \; \Rightarrow  P\approx_{\epsilon} Q
$$
Note that when $\epsilon = \ctop$ we have $P \approx_{\mathbb{K}} Q \; \Leftrightarrow  P\approx_{\epsilon} Q$
\end{proposition}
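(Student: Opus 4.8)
The plan is to reuse the witnessing relation itself and reduce the whole statement to a single algebraic fact about residuation. Concretely, for a pair $(P,Q)\in\mathcal R$, an action $a$ and a class $C$, write $A=\sum_{D\in C}(P\xRightarrow{(a,k)}D)$ and $B=\sum_{D\in C}(Q\xRightarrow{(a,k')}D)$, and let $A_\tau,B_\tau$ be the corresponding $\tau$-sums. These are literally the same semiring elements compared in Def.~\ref{def:weak-bis} and in Def.~\ref{def:epsilon}. So if $\mathcal R$ witnesses $P\approx_\mathbb{K}Q$, I would take \emph{the very same} $\mathcal R$, with the same equivalence classes, as a candidate weak $\epsilon$-bisimulation; the proposition then collapses to showing that $A\not\lessgtr_\mathbb{K}B$ entails $A\div\epsilon\geq_\mathbb{K}B$ and $B\div\epsilon\geq_\mathbb{K}A$, and identically for $A_\tau,B_\tau$. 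This is exactly the structure I would use for Prop.~\ref{prop:trace}.

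First I would isolate two facts about $\div$, valid in every residuated (complete) c-semiring and hence in all the instances of Sec.~\ref{sec:semirings}: (i) since $\ctimes$ is intensive, $\epsilon\ctimes a\leq_K a$, so $a$ lies in $\{x\mid\epsilon\ctimes x\leq_K a\}$ and therefore $a\div\epsilon=\max\{x\mid\epsilon\ctimes x\leq_K a\}\geq_K a$; and (ii) since $\ctop$ is the unit of $\ctimes$, $\ctop\ctimes x=x$, whence $a\div\ctop=\max\{x\mid x\leq_K a\}=a$.

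For the forward implication I would fix $\epsilon$ and any triple $(P,Q)\in\mathcal R$, $a\in Act$, $C\in\mathcal R$. Def.~\ref{def:weak-bis} gives $A\not\lessgtr_\mathbb{K}B$, i.e. $A=B$ or $A,B$ incomparable; in the equality case fact (i) yields $A\div\epsilon\geq_K A=B$ and symmetrically $B\div\epsilon\geq_K B=A$, and the same argument on $A_\tau=B_\tau$ discharges the $\tau$-clause, so $\mathcal R$ is a weak $\epsilon$-bisimulation and $P\approx_\epsilon Q$. For the equivalence at $\epsilon=\ctop$, fact (ii) rewrites the four conditions of Def.~\ref{def:epsilon} as $A\geq_K B\wedge B\geq_K A$ and $A_\tau\geq_K B_\tau\wedge B_\tau\geq_K A_\tau$, which by antisymmetry of $\leq_K$ are exactly $A=B$ and $A_\tau=B_\tau$; reading $\not\lessgtr_\mathbb{K}$ as equality this coincides with Def.~\ref{def:weak-bis}, giving both directions.

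The step I expect to be the real obstacle is the \emph{incomparable} alternative of $\not\lessgtr_\mathbb{K}$. When $A$ and $B$ are incomparable there is no reason for $A\div\epsilon\geq_K B$ to hold, and for $\epsilon=\ctop$ fact (ii) forces the failing requirement $A\geq_K B$; so the clean reduction, and in particular the ``$\Leftrightarrow$'' at $\ctop$, really needs $\leq_K$ to be total. I would therefore either restrict to the uniquely invertible, cancellative instances of Sec.~\ref{sec:semirings} (tropical, fuzzy, probabilistic, bottleneck), where $\not\lessgtr_\mathbb{K}$ collapses to $=$ and the proof is complete, or, to keep the weaker ``$\exists\,\epsilon$'' reading of Def.~\ref{def:epsilon} honest on general posets, observe that $\epsilon=\cbot$ makes $a\div\cbot=\max\{x\mid\cbot\ctimes x\leq_K a\}=\ctop$ by the annihilator, trivialising every $\epsilon$-condition. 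I would flag explicitly the tension between the ``$\forall\epsilon$'' of the statement and the ``$\exists\,\epsilon$'' buried in Def.~\ref{def:epsilon}, since the incomparable case is precisely where the two readings diverge.
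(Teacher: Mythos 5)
The paper gives you nothing to compare against here: Prop.~\ref{prop:bis} is stated without any proof (only Th.~\ref{main} comes with a proof sketch, deferred to the technical report), so your argument has to stand on its own --- and its core does. Reusing the witnessing relation $\mathcal R$ with its equivalence classes is clearly the intended route, and your two residuation facts are valid in every residuated c-semiring: intensivity of $\ctimes$ puts $a$ inside $\{x \mid \epsilon \ctimes x \leq_K a\}$, so $a \div \epsilon \geq_K a$, and $\ctop$ being the unit gives $a \div \ctop = a$. These settle the equality branch of $\not\lessgtr_\mathbb{K}$ for every $\epsilon$, and both directions of the claimed equivalence at $\epsilon = \ctop$ whenever $\leq_K$ is total.

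The obstacle you flag is genuine, and it is a defect of the proposition rather than of your proof. The paper explicitly admits Cartesian products of semirings, where $\leq_K$ is properly partial; take the product of two tropical semirings and two processes whose single $a$-transitions carry weights $\langle 1,2\rangle$ and $\langle 2,1\rangle$. These sums are incomparable, so Def.~\ref{def:weak-bis} is satisfied, yet at $\epsilon = \ctop = \langle 0,0\rangle$ the requirement $\langle 1,2\rangle \div \ctop = \langle 1,2\rangle \geq_K \langle 2,1\rangle$ fails: both the ``$\forall \epsilon$'' implication and the ``$\Leftrightarrow$'' at $\ctop$ are false as stated. (Interestingly, a non-trivial $\epsilon$ such as $\langle 1,1\rangle$ does work for this pair, so it is precisely the universally quantified claim, and its $\ctop$ instance, that breaks.) Your two repairs --- restricting to totally ordered (cancellative) instances where $\not\lessgtr_\mathbb{K}$ collapses to equality, or weakening Def.~\ref{def:epsilon} to tolerate incomparability --- are the right ones, and the $\forall$/$\exists$ tension you point out is also real: under the literal existential reading of Def.~\ref{def:epsilon}, $\epsilon = \cbot$ gives $a \div \cbot = \ctop$ and trivialises the relation, whereas Example~\ref{ex:gino2} ($W \approx_{2} Y$ but not $W \approx_{1} Y$) only makes sense under the fixed-$\epsilon$ reading you adopt. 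One small inaccuracy: the compared sums are not ``literally the same'' in the two definitions, since the right-hand sides in Def.~\ref{def:epsilon} carry strong arrows; this is presumably a typo, and harmless for your argument, because the weak sum dominates the strong one ($+$ being the lub, extra summands can only increase the value), so $A \div \epsilon \geq_K A = B \geq_K B'$ still closes the equality case.
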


\begin{example}\label{ex:gino2}
Consider two processes $P = (\tau,3).(a,4).(\tau,5)$ and $Q= (\tau,2).(a, 4).(\tau, 1)(\tau,1)$ in the tropical semiring. We have that $P  \approx_{1} Q$  (\ie $\epsilon = 1$) while $P \approx_{\mathbb{K}} Q$ does not hold.
Instead, if we have two processes $W = (\tau,3).(a,4).(\tau,3)$ and $Y= (\tau,2).(a, 4).(\tau, 1).(\tau,1)$, $W  \approx_{2} Y$  (\ie $\epsilon = 2$) while $W  \approx_{1} Y$ does not hold.
\end{example}

Note that both $P$ and $Q$, and $W$ and $Y$ in Ex.~\ref{ex:gino2} are weak bisimilar according to the classic definition given in \cite{milner}. Therefore, by considering the bisimulation relation in Def.~\ref{def:weak-bis} we obtain a more restrictive equivalence-relation. 

\section{C-semiring H-M Logic}\label{sec:logic}
In the previous section, we have shown how quantitative security properties  can be specified by using several quantitative process-equivalences in order to compare the behaviour of a system with respect to the expected one. 
A different approach for specifying  quantitative security-requirements is to express them as a logic formula that the system has to satisfy. It can be useful, for instance, when it is not decidable if two processes are quantitatively equivalent (as defined in Sec.~\ref{sec:qgndc}). Furthermore, 
%
%
%
%
some properties as, for example, \emph{safety properties}\footnote{E.g., properties expressing that, if something goes wrong, it can be detected in a finite number of steps}, can be easily expressed through a logic formula and allow for not requiring the behaviour of the whole system to be checked~\cite{683116,DBLP:journals/entcs/MartinelliM07}.

For this reason, in the rest of this section we propose a different approach with respect to the one described in Sec.~\ref{sec:qgndc}, with the purpose to advance an alternative methodology to quantitatively specify the security of a system. Such approach is based on Model Checking and a satisfiability procedure, instead of behavioural equivalences and a comparison checking. 

Hence, in order to specify whether a system is secure or not, we need to require that it satisfies a logic formula expressing the intended security-requirements.
To this aim, next we propose a quantitative variant of the Hennessy-Milner logic, named c-HM, in such a way to be able to specify a quantitative formula. In particular, differently from~\cite{LlM05}, where weights are associated to system states, in our approach values are part of  transition labels (together with an action): again we consider a MLTS (see Def.~\ref{def:MLTS}), and we evaluate the satisfaction of a c-HM formula over processes expressed in GPA.
In Def.~\ref{def:syntax}, we syntactically define the set $\Phi_M$ of correct formulas given  an MLTS $M$.

\begin{definition}[syntax]\label{def:syntax}
Given a MLTS $M = \langle S,$ $\mathit{Act}, \mathbb K, T, s_0\rangle$, and let $a \in \mathit{Act}$, a formula $\phi \in \Phi_M$ is syntactically expressed as follows, where $k \in K$:
$$
\phi ::= k \mid  
\phi_1 + \phi_2 \mid \phi_1 \times \phi_2 \mid \phi_1 \sqcap \phi_2 \mid \langle a \rangle \phi \mid [a] \phi
$$
%
\end{definition}
Clearly we can express more than just true (corresponding to $\ctop \in K$) and false ($\cbot \in K$) through all the values  $k\in K$. 
Semiring operators $+$ (the lub $\sqcup$), glb $\sqcap$, and $\times$ are used in place of classical logic operators $\vee$ and $\wedge$, in order to compose the truth values of two formulas together. As a reminder, when the $\times$ operator is idempotent, then $\times$ and $\sqcap$ coincide (see Sec.~\ref{sec:background}).
Finally, we have the two classical modal operators, \ie~``possibly'' ($\langle\cdot\rangle$), and ``necessarily'' ($[\cdot]$).


It is also possible to have a negation operator $\neg : K \longrightarrow K$, which is a unary operator such that, being $A \subseteq Act$, 
$\neg a \in A$ and $\neg\neg(a) = a$ for all $a \in A$, and $\neg \bigsqcup \{A'\} = \{\neg a \mid a \in A\}$ for all 
$A' \subseteq A$, where $\bigsqcup$ and $\bigsqcap$ are the set-wise lub and glb operators of the lattice $\langle A, \leq_K \rangle$. The negation operator allows us to use the equivalence $\neg \cbot = \ctop$. Note that 
the duality $\neg(a+b) = (\neg a) \times (\neg b)$ holds exactly when $\times$ is idempotent. Some examples where  
negation can be defined are the logical c-semiring, where logical negation is a negation operator, and 
probabilistic and fuzzy c-semirings, where $1-$ is a negation operator. On the other hand, it is not possible to 
define a negation operator for the tropical semiring. Hence, the syntax given in Def.~\ref{def:syntax} is proposed without considering the negation operator; otherwise, we can simplify it by removing $\cbot$ and $[\,]\phi$, since they can be respectively rewritten as $\neg \ctop$ and $\neg \langle\,\rangle\neg\phi$.

The semantics of a formula $\phi$ is given on a particular MLTS $M = \langle S, \mathit{Act}, \mathbb K, T, s_0 \rangle$, with the purpose to check the specification defined by $\phi$ over the behaviour of a weighted transition-system (in Sec.~\ref{sec:muGPA}, $M$ defines the behaviour of a GPA process). Note that, while in \cite{Andersen} the semantics of a formula computes the states $U \subseteq S$ that satisfy that formula, our semantics $\llbracket \, \rrbracket_M: (\Phi_M \times S) \longrightarrow K$  (see Tab.~\ref{tab:semantics}) computes a truth value (in $K$) for the same $U$. For instance, if we use the boolean semiring we always obtain $\ctop$ iff $U \not= \emptyset$, and $\cbot$ otherwise. It is not difficult to extend our semantics to also  return  $U$, as in \cite{Andersen}; however, in this work we are focused on computing a degree of satisfaction for $\phi$ (and $U$).

In Tab.~\ref{tab:semantics} and in the following (when clear from the context) we omit $M$ from $\llbracket \, \rrbracket_M$ for the sake of readability. The semantics is parametrised over a state $s \in S$, which is used to  consider only  the transitions that can be fired at a given step (labelled with an action $a$). The first $s$ will be the single initial state of the MLTS we define in Def.~\ref{def:MLTS}, \ie $s_0$.\footnote{Note that is also possible to let the semantics in Tab.~\ref{tab:semantics}  be parametrised on a set of states, by aggregating values on all the transitions originating from all of them. For instance, in case we have multiple initial states, as in~\cite{Buchholz}.}

\begin{table}[t]\hrulefill
\setlength{\tabcolsep}{0.6em}
\begin{center}

\begin{tabular}{ll}
$\llbracket k \rrbracket (s)$ &  $= k \in K \; \; \forall s \in S$\\   

$\llbracket \phi_1 + \phi_2\rrbracket (s)$& $= \llbracket \phi_1 \rrbracket(s) + \llbracket \phi_2\rrbracket(s)$ \\

$\llbracket \phi_1 \times \phi_2\rrbracket (s)$ & $= \llbracket \phi_1 \rrbracket(s) \times \llbracket \phi_2 \rrbracket (s)$ \\  

$\llbracket \phi_1 \sqcap \phi_2\rrbracket (s)$ & $= \llbracket \phi_1 \rrbracket(s) \sqcap \llbracket \phi_2 \rrbracket (s)$\\
&\\
$\llbracket \langle a\rangle \phi\rrbracket (s)$& $=\sum\limits_{R}( T(s,a,s') \times\llbracket \phi \rrbracket (s'))$ \\
  
&\\
$\llbracket  [a] \phi\rrbracket (s)$ & $ = 
\bigsqcap\limits_{R} (T(s,a,s') \times \llbracket \phi \rrbracket (s')) 
$\\
\\
\, & where $R = \{s' \in S \mid s\stackrel{a}{\rightarrow}s' \in T\}$
\end{tabular}

\end{center}\hrulefill 
\caption{Semantics of c-HM. $\sum\limits (\emptyset) =\cbot$ and $ \bigsqcap\limits (\emptyset) = \ctop$.} \label{tab:semantics}
\end{table}
%
\subsection{Interpreting c-HM over GPA}\label{sec:muGPA}
Both GPA and c-HM logic formulas can be interpreted on a MLTS. In this section, we focus on the interpretation of a c-HM formula $\phi$ on a GPA process $P$ to provide a notion of \emph{quantitative satisfiability} for the specification described by $\phi$, on the behaviour of a process $P$.
First of all, we define the projection of a process on an MLTS.
\begin{definition}[MLTS projection]
Given an MLTS $M = \langle S, \mathit{Act}, \mathbb K, T, s_0\rangle$, its projection over a process $P$ defined over the same $M$ is defined as $M \Downarrow_P = \langle S_P, \mathit{Act}, \mathbb K, T_P, s_0 \rangle$, where $S_P = \{ s \in S \mid s \in Der(P)\}$ and $T_P = \{(s,a,s') \in S\times Act \times S \mid s,s' \in S_{P} \wedge a \in Sort(P) \}$.\footnote{All the processes in parallel share the same $s_0$.}
\end{definition}

We are now ready to rephrase the notion of satisfiability to take into account a threshold $k$ ($k$-satisfiability):

\begin{definition}[$\models_k$]\label{def:sat}
A process $P$ satisfies  a c-HM formula $\phi$  with a threshold-value $t$, \ie $P \models_{t} \phi$, if and only if the interpretation of $\phi$ on $M \Downarrow_P$ is not worse than $t$. Formally:

$$
P \models_{t} \phi \Leftrightarrow  t \leq \llbracket \phi \rrbracket_{M\Downarrow_P}(s_0)
$$
\end{definition}

This means that $P$ is a model for a formula $\phi$ (with respect to a certain value $t$) iff the evaluation of  $\phi$ on  $P$ is not worse than $t$ in the partial order defined by $+$ in $\mathbb K$. It is worth noting that the interpretation of $\phi$ on $P$ is independent by the valuation of $P$ itself.


\emph{Remark 1.} Note that, if $P$ does not satisfy a formula $\phi$ then $\llbracket \phi \rrbracket_{M\Downarrow_P} = \cbot$. Consequently, the only $t$ such that $P \models_{t} \phi$ is $t= \cbot$. If $\llbracket \phi \rrbracket_{M\Downarrow_P} \not= \cbot$, then $\phi$ is satisfiable with a certain threshold $t \not= \cbot$.

\begin{example}\label{exe1}
In order to exemplify the concept expressed here, let us consider a formula $\phi$ stating that \emph{before opening a document \lq\lq file2" you have to close an already opened document \lq\lq file1" }. This is a security property aiming at preserving the confidentiality and integrity of the two documents. $\phi$ can be expressed by a c-HM formula as follows:
\small$$
\phi=[\texttt{open\_file1}]([\texttt{close\_file1}][\texttt{open\_file2}]\ctop \times [\texttt{open\_file2}]\cbot)
$$ 
The sub-formula after $\times$ (\ie $[\texttt{open\_file2}]$) is weighted with $\cbot$ because the opening of file2 has to be prevented in case file1 is not closed. Vice-versa, the left-side of $\times$ expresses the right behaviour, and thus it is weighted with $\ctop$. 

\noindent Then consider three different processes $P$ and $Q$, defined on  $\langle \mathbb{N}^+ \cup\{+\infty\}, min, \hat{+}, +\infty, 0 \rangle$ (\ie the tropical semiring):
\small$$
\begin{array}{lll}
P&=& (\texttt{open\_file1}, 5).(\texttt{close\_file1}, 4).0\\
Q &=& (\texttt{open\_file1}, 3).(\texttt{close\_file1}, 10).0\\
V &=& (\texttt{open\_file1}, 4).(\texttt{open\_file2}, 2).0
\end{array}
$$
According to our definition, $P \models_{11} \phi$ because, referring to Tab.~\ref{tab:semantics}, at the first step we consider the cost of the action $\texttt{open\_file1}$, \ie $5$, which is arithmetically  summed to \small$$\eval{([\texttt{close\_file1}][\texttt{open\_file2}]0 \: \hat{+} \: [\texttt{open\_file2}]\infty)}_{P'}$$
\noindent where $P'=(\texttt{close\_file1}, 4).0$. After $\texttt{close\_file1}$, the process halts, thus $\eval{[\texttt{open\_file2}]\infty} = 0$. Finally, we have $\eval{\phi}_{P}= 5 \: \hat{+} \: 4 \: \hat{+} \: 0 = 9$, which satisfies  the asked threshold $11$. $Q$ is evaluated in the same way, but since $\eval{\phi}_{Q}= 3 \: \hat{+} \: 10 \: \hat{+} \: 0 = 13$, we have that $P \not\models_{11} \phi$ because $11 \not\leq 14$. Therefore, even if there is a subset of $Q$ states that satisfies $\phi$, the degree satisfaction does not respect the requested threshold.
Finally,  $\phi$ is not satisfied by $V$ because $\eval{\phi}_{V}= 5 \: \hat{+} \:$ $\eval{([\texttt{close\_file1}][\texttt{open\_file2}]0 \: \hat{+} \: [\texttt{open\_file2}]$ $\infty)}_{V'} = 4 \: \hat{+} \: 2 \: \hat{+} \: \infty = \infty$.  
\end{example}

\section{Quantitative Partial Model Checking}\label{sec:qpmc}
\begin{table*}[!t] \noindent \hrulefill
{\small
$$
\begin{array}{lcl}
k_{\mathit{//}_{P}}  &=& k\\ 
(\phi_{1}\times \phi_{2})_{\mathit{//}_{P}} &=& (\phi_{1})_{\mathit{//}_{P}}\times (\phi_{2})_{\mathit{//}_{P}} \\
 
(\phi_{1}+ \phi_{2})_{\mathit{//}_{P}} &=& (\phi_{1})_{\mathit{//}_{P}} +  (\phi_{2})_{\mathit{//}_{P}}\\
 
 (\phi_{1}\sqcap \phi_{2})_{\mathit{//}_{P}} &=& (\phi_{1})_{\mathit{//}_{P}} \sqcap (\phi_{2})_{\mathit{//}_{P}} \\

([a]\phi_{1})_{\mathit{//}_{P}} &=& \left\{
\begin{array}{ll}
[a](\phi_{1})_{\mathit{//}_{P}} \sqcap
\bigsqcap\limits_{P\arco{a,k_a} P'} ( (k_a) \times (\phi_{1})_{\mathit{//}_{P'}}) &  a \not \in L\\
\bigsqcap\limits_{P\arco{a, k_a} P'} ( (k_a) \times [a] (\phi_{1})_{\mathit{//}_{P'}}) & a \in L
\end{array}
\right.
\\

(\langle a \rangle \phi_{1})_{\mathit{//}_{P}} &=& \left\{
\begin{array}{ll}
\langle a \rangle (\phi_{1})_{\mathit{//}_{P}} + \sum\limits_{P\arco{a,k_a} P'} ( (k_a) \times (\phi_{1})_{\mathit{//}_{P'}}) & a \not \in L\\
\sum\limits_{P\arco{a,k_a} P'} (( k_a)\times \langle a \rangle (\phi_{1})_{\mathit{//}_{P'}}) & a \in L
\end{array}
\right.
  
\end{array}
$$


}
\hrulefill
\caption{A QPMC function.}\label{def:qpmc}
\end{table*}


In this section we present a quantitative version of PMC~\cite{Andersen}, named QPMC, with respect to the parallel composition of GPA processes.
Such a function is defined in Tab.~\ref{def:qpmc}.
Being the logic closed, the interpretation of a formula obtained through the application of such function is straightforward. 
In Th.~\ref{main} we report a result similar (\ie weighted) to the one in \cite{Andersen}.

\begin{theorem}\label{main}
Given any two processes $P$ and $Q$ in parallel, and any c-HM formula $\phi$, then we have that
\small$$\llbracket \phi \rrbracket_{P \parallel_L Q} =  \llbracket \phi_{\mathit{//}_P} \rrbracket_Q.$$ 
\end{theorem}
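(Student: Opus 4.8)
The plan is to prove the identity by structural induction on $\phi$, with the processes $P$ and $Q$ universally quantified so that the induction hypothesis can be reused on the derivatives appearing in the modal cases; I evaluate the semantics of Tab.~\ref{tab:semantics} at the initial state throughout. The key preliminary observation is that the $a$-transitions of $P\parallel_L Q$ are fixed by the SOS rules of Tab.~\ref{GPASOS}: for a synchronising action $a\in L$ one has $P\parallel_L Q \xrightarrow{(a,k_a\times l)} P'\parallel_L Q'$ exactly when $P\xrightarrow{(a,k_a)}P'$ and $Q\xrightarrow{(a,l)}Q'$; for an interleaved action $a\notin L$ the $a$-successors are precisely the states $P'\parallel_L Q$ with weight $k_a$ (one per move $P\xrightarrow{(a,k_a)}P'$) together with the states $P\parallel_L Q'$ with weight $l$ (one per move $Q\xrightarrow{(a,l)}Q'$).

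The base case $\phi=k$ is immediate, since $\llbracket k\rrbracket$ is the constant $k$ on every state and $k_{\mathit{//}_P}=k$. For $\phi_1+\phi_2$, $\phi_1\times\phi_2$ and $\phi_1\sqcap\phi_2$ the semantics of Tab.~\ref{tab:semantics} acts pointwise through exactly the semiring operators that the QPMC function of Tab.~\ref{def:qpmc} pushes inside the connective, so each of these cases closes by applying the induction hypothesis to $\phi_1$ and $\phi_2$ and recombining with the corresponding binary operator.

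The modal operators form the core of the proof. For $\langle a\rangle\phi_1$ I expand $\llbracket\langle a\rangle\phi_1\rrbracket_{P\parallel_L Q}$ with the transition analysis above, split it along $a\notin L$ versus $a\in L$, rewrite every summand using the induction hypothesis (so that $\llbracket\phi_1\rrbracket_{P'\parallel_L Q}=\llbracket(\phi_1)_{\mathit{//}_{P'}}\rrbracket_Q$, and symmetrically in $Q$), and match the outcome termwise against the two branches of $(\langle a\rangle\phi_1)_{\mathit{//}_P}$. In the interleaved branch this only reorders the two families of summands by commutativity of $+$; in the synchronising branch it uses distributivity of $\times$ over $+$ (a c-semiring axiom) together with associativity of $\times$ to turn $k_a\times\sum(l\times X)$ into $\sum((k_a\times l)\times X)$. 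The box $[a]\phi_1$ is treated the same way with $\bigsqcap$ in place of $\sum$: when $a\notin L$ the glb over all $a$-successors factors, by associativity and commutativity of $\sqcap$, into the glb over the $P$-moves and the glb over the $Q$-moves, matching the two conjuncts of $([a]\phi_1)_{\mathit{//}_P}$.

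I expect the synchronising box case ($a\in L$) to be the main obstacle. Here the direct semantics is a single glb of the shape $\bigsqcap((k_a\times l)\times X)$ ranging over pairs of matching $P$- and $Q$-moves, whereas the QPMC side unfolds to the nested expression $\bigsqcap_{P\text{-move}}\bigl(k_a\times\bigsqcap_{Q\text{-move}}(l\times X)\bigr)$. Collapsing the nesting requires pulling the weight $k_a$ through the inner glb, i.e. the identity $k_a\times\bigsqcap_i x_i=\bigsqcap_i(k_a\times x_i)$. Monotonicity of $\times$ over $\leq_K$ gives the inequality in one direction for free, but the reverse, and hence equality, does not follow from the c-semiring axioms in general; it holds precisely when $\times$ is idempotent, in which case $\times$ coincides with $\sqcap$ and the identity is immediate from idempotency and associativity of $\sqcap$, as recalled in Sec.~\ref{sec:background}. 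I would therefore carry out the box/synchronisation step under the idempotency hypothesis on $\times$ (the distributive-lattice fragment) and flag that this is the sole point where it is needed, every diamond case and every non-synchronising case being valid over an arbitrary c-semiring. A final bookkeeping point is that the quantification over derivatives is legitimate because the statement is established for all $P,Q$ simultaneously, and the finiteness assumption on non-$\ctop$ transitions from Sec.~\ref{sec:qbe} keeps all the sums and glbs finite.
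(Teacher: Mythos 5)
Your strategy coincides with the paper's: structural induction on $\phi$, with $P$ and $Q$ universally quantified so the inductive hypothesis can be invoked on derivatives, the same SOS-based analysis of the transitions of $P\parallel_L Q$, and termwise matching against Tab.~\ref{def:qpmc}. The cases you call routine (the constant $k$, the three binary connectives, and $\langle a\rangle$ with $a\notin L$) are exactly the ones the paper's sketch writes out, and your handling matches it; the paper defers all remaining cases to its technical report. Your real addition is the synchronising box case, and there you have put your finger on a genuine issue that the sketch never confronts: after applying the inductive hypothesis, that case reduces precisely to the law $k\times\bigsqcap_i x_i=\bigsqcap_i(k\times x_i)$, which is not a consequence of the c-semiring axioms (they only give distributivity of $\times$ over $+$, \ie over lubs), and which can genuinely fail, \eg in the c-semiring of ideals of a polynomial ring with ideal product as $\times$ and intersection as glb.

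Two things in your resolution need repair, however. First, ``precisely when $\times$ is idempotent'' is wrong as a characterisation: idempotency is sufficient but not necessary. In every \emph{totally ordered} c-semiring the law holds for finite nonempty families by monotonicity alone, since the glb of finitely many elements is then attained as one of them. This weaker hypothesis is the one that matters here: the paper's own example in Sec.~\ref{sec:qpmc} (Ex.~\ref{exe2}) applies the theorem to a box formula over a synchronising action in the tropical semiring, whose $\times=\hat{+}$ is \emph{not} idempotent, so under your restriction the paper's own example would fall outside the theorem. Second, even granting idempotency or totality, the identity you invoke fails for the \emph{empty} family: by the convention of Tab.~\ref{tab:semantics}, $\bigsqcap\emptyset=\top$, while $k\times\top=k$. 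Concretely, if $a\in L$, $P$ has an $a$-transition of weight $k_a\neq\top$, and $Q$ has no $a$-transition, then $\llbracket[a]\phi_1\rrbracket_{P\parallel_L Q}=\top$ (no synchronisation is possible), whereas $\llbracket([a]\phi_1)_{\mathit{//}_P}\rrbracket_Q=\bigsqcap_{P\xrightarrow{(a,k_a)}P'}\bigl(k_a\times\top\bigr)$, which is not $\top$. So this subcase is a counterexample to the step as you describe it (and, read literally, to the theorem itself); your finiteness remark does not cover it, and it must be excluded or handled by an explicit extra assumption, \eg that every action of $L$ enabled in $P$ is also enabled in $Q$.
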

\begin{proof}[Sketch]\footnote{The interested reader can find the complete proof in the appendix of the technical report \cite{QAPL15-TR}.}
The proposition is proved by induction on the complexity of a formula $\phi$. 
\begin{description}
\item[Base case, \underline{$\mathbf{\phi=k}$}:] According to Tab.~\ref{tab:semantics}, $\llbracket k \rrbracket_{P\Vert Q}= k=k_{\mathit{//}_{P}}=\llbracket k_{\mathit{//}_{P}} \rrbracket_{Q}$.
\item[Inductive Step:] As an example, let us now consider two different formulas:
\begin{description}
\item[\underline{$\mathbf{\phi=\phi_{1}\times \phi_{2}}$}:] According to Tab.~\ref{tab:semantics} we have that $\llbracket \phi \rrbracket_{P\Vert Q}=\llbracket \phi_{1} \times \phi_{2} \rrbracket_{P\Vert Q} = \llbracket  \phi_{1} \rrbracket_{P\Vert Q} \times \llbracket \phi_{2} \rrbracket_{P\Vert Q}$. By inductive hypothesis, $\llbracket  \phi_{1} \rrbracket_{P\Vert Q}= \llbracket  (\phi_{1})_{\mathit{//}_{P}} \rrbracket_{ Q}$ and $\llbracket  \phi_{2} \rrbracket_{P\Vert Q}=\llbracket  (\phi_{2})_{\mathit{//}_{P}} \rrbracket_{Q}$. Then $\llbracket  \phi_{1}  \rrbracket_{P\Vert Q} \times \llbracket  \phi_{2} \rrbracket_{P\Vert Q}= \llbracket  (\phi_{1})_{\mathit{//}_{P}} \rrbracket_{ Q} \times \llbracket  (\phi_{2})_{\mathit{//}_{P}} \rrbracket_{ Q} = \llbracket  (\phi_{1})_{\mathit{//}_{P}} \times (\phi_{2})_{\mathit{//}_{P}} \rrbracket_{Q} = \llbracket (\phi_{1} \times \phi_{2})_{\mathit{//}_{P}} \rrbracket_{Q}$.
\item The $+$ and the $\sqcap$ operators can be similarly proved.
\item[\underline{$\mathbf{\phi=\langle a \rangle \phi_{1}}$}:] According to Tab.~\ref{tab:semantics}, we have \small$$\llbracket \phi \rrbracket_{P\Vert Q}=\llbracket\langle a \rangle \phi_{1} \rrbracket_{P\Vert Q} = \sum\limits_{P\Vert Q \xrightarrow{(a, k_a)} (P\Vert Q)'}( (k_a) \times\llbracket \phi_{1} \rrbracket_{(P\Vert Q)'}).$$ Here we only prove  one of several possible cases: 
if $a \not \in L$, then \small$$\llbracket\langle a \rangle \phi_{1} \rrbracket_{P\Vert Q} = \sum\limits_{P\Vert Q \xrightarrow{(a, k_a)} (P\Vert Q)'}( (k_a) \times\llbracket \phi_{1} \rrbracket_{(P\Vert Q)'})$$ where $(P\Vert Q)'$ is equal to $P' \Vert Q$ if $P \xrightarrow{(a,k_{a})}P'$ or to $P\Vert Q'$ if $Q \xrightarrow{(a,k_{a})}Q'$. Hence, \small$$\sum\limits_{P\Vert Q \xrightarrow{(a, k_a)} (P\Vert Q)'}( (k_a) \times\llbracket \phi_{1} \rrbracket_{(P\Vert Q)'})= \sum\limits_{P \xrightarrow{(a, k_a)} P'}( (k_a) \times\llbracket \phi_{1} \rrbracket_{(P'\Vert Q)})+\sum\limits_{ Q \xrightarrow{(a, k_a)} Q'}( (k_a) \times\llbracket \phi_{1} \rrbracket_{(P\Vert Q')}).$$\\
 By inductive hypothesis, this is equal to \small$$\sum\limits_{P \xrightarrow{(a, k_a)} P'}( (k_a) \times\llbracket (\phi_{1})_{\mathit{//}_{P'}} \rrbracket_{Q})+\sum\limits_{ Q \xrightarrow{(a, k_a)} Q'}( (k_a) \times\llbracket (\phi_{1})_{\mathit{//}_{P}} \rrbracket_{Q'}).$$ \\Hence, $\llbracket\langle a \rangle \phi_{1} \rrbracket_{P\Vert Q}=\sum\limits_{P \xrightarrow{(a, k_a)} P'}( (k_a) \times\llbracket (\phi_{1})_{\mathit{//}_{P'}} \rrbracket_{Q}) + \llbracket\langle a \rangle (\phi_{1})_{\mathit{//}_{P}}\rrbracket_{Q} $.
On the other hand, \small$$\phi_{\mathit{//}_{P}} = (\langle a \rangle \phi_{1})_{\mathit{//}_{P}} = \langle a \rangle (\phi_{1})_{\mathit{//}_{P}} + \sum\limits_{P \xrightarrow{(a, k_a)} P'}( (k_a) \times (\phi_{1})_{\mathit{//}_{P'}})$$ and its semantics evaluation with respect to the process $Q$ is $\llbracket (\langle a \rangle \phi_{1})_{\mathit{//}_{P}}\rrbracket_{Q}=$ $$\llbracket \langle a \rangle (\phi_{1})_{\mathit{//}_{P}} + \sum\limits_{P \xrightarrow{(a, k_a)} P'}( (k_a) \times (\phi_{1})_{\mathit{//}_{P'}})\rrbracket _{Q}= \llbracket\langle a \rangle (\phi_{1})_{\mathit{//}_{P}}\rrbracket_{Q} + \llbracket \sum\limits_{P \xrightarrow{(a, k_a)} P'}( (k_a) \times (\phi_{1})_{\mathit{//}_{P'}})\rrbracket_{Q}.$$ Hence 
$\llbracket (\langle a \rangle \phi_{1})_{\mathit{//}_{P}}\rrbracket_{Q}= \llbracket\langle a \rangle (\phi_{1})_{\mathit{//}_{P}}\rrbracket_{Q} + \sum\limits_{P \xrightarrow{(a, k_a)} P'}( (k_a) \times\llbracket (\phi_{1})_{\mathit{//}_{P'}}\rrbracket_{Q})$. 
\end{description}
\end{description}
\end{proof}
\begin{example}\label{exe2}
Let us consider, the tropical semiring $\langle \mathbb{N}^+ \cup\{+\infty\}, min, \hat{+}, +\infty, 0 \rangle$, and two actions $\texttt{open}$ and $\texttt{close}$ ($L=\{\texttt{open}\}$). In addition, let us consider a formula $\phi= [\texttt{open}]\langle \texttt{close}\rangle\mathbf{1}$ stating that once a file is opened, then it has to be closed. We omit the name of the file because not significant for our example.
Let $P$ and $Q$ be two GPA processes: \\
\small$$
P = (\texttt{open}, 5).(\texttt{close}, 4).0 + (\texttt{open}, 6).0\quad \qquad
Q = (\texttt{open}, 4).(\texttt{close},3).0
$$
Let us consider the combined process $P \Vert_{L} Q$, where $P$ and $Q$ synchronise one another on actions in $L$, \ie on the action $\texttt{open}$. It is easy to see that $P \Vert_{L} Q \models_{20} \phi$.
By applying QPMC to $\phi$ w.r.t. $P$ we obtain:
\small$$
\begin{array}{lcl}
\phi_{//P} &=& (5 \times [\texttt{open}](\langle\texttt{close}\rangle\mathbf{1})_{//P'}) \sqcap (6 \times [\texttt{open}]\langle(\texttt{close}\rangle\mathbf{1})_{//P'})\\
&=& (5 \times [\texttt{open}](\langle\texttt{close}\rangle\mathbf{1}+ (4 \times 1))\sqcap (6 \times [\texttt{open}](\langle\texttt{close}\rangle\mathbf{1} + (4 \times 1))
\end{array}
$$\normalsize
\noindent where $+ = \min$, $\times \equiv \hat{+}$,  and $\sqcap  \equiv  \max$.
The QPMC function helps to understand which formula $Q$ has to satisfy in order to guarantee that the whole system satisfies the initial requirement. In this simple case, we know the behaviour of $Q$ and we can check if it quantitatively satisfies $\phi_{//P}$. To do this, we prove that $\llbracket \phi\rrbracket_{P\Vert_{L}Q} = \llbracket \phi_{\mathit{//P}}\rrbracket_{Q}$.
We have:
\small$$
\begin{array}{lcl}

\llbracket \phi\rrbracket_{P\Vert_{L}Q} &=& max(9+(min (4,3)), 10 +3)= max( 12,13) = 13,\\
\llbracket \phi_{\mathit{//P}}\rrbracket_{Q} &=& max (5 +(4 + min (3,5)), 6 +(4 + min (3,5)))= max(12, 13) = 13.
\end{array}
$$
%
\end{example}


%

\section{Related Work}\label{sec:related}
The aim of this work is to present a semiring-based formal framework where to deal with quantitative specification of security in combined systems.
We dedicate the first part of this section to alternative definitions of quantitative bisimulation relations, in some cases even not applied to security (e.g., \cite{MiculanP13}).

In \cite{MiculanP13} the authors  extend \emph{Weighted Labelled Transition Systems} (WLTS) towards other behavioural equivalences, by considering semirings of weights. The main result of such work is the definition of a general notion of \emph{weak weighted bisimulation}. 
They show that this relation coincides with the usual weak bisimulation in case of non-deterministic and fully-probabilistic systems. Moreover, it can also be extended towards kinds of LTSs where this notion is currently missing (e.g., stochastic systems). In Def.~\ref{def:epsilon-trace} we also relax quantitative weak-bisimulation to weak $\epsilon$-bisimulation.

In \cite{aldini} the authors address the problem of providing a quantitative estimation of the confidentiality of a system by measuring its information leakage. In our analysis the most powerful adversary is measured via a notion of approximate process equivalence. In practice, the lack of information leakage is expressed by a successful weak probabilistic bisimulation based check. Whenever such a check fails, approximate relations relax the conditions imposed by the weak probabilistic-bisimulation, in such a way that the level of approximation represents an estimate of the amount of information leakage. Our notion of $\epsilon$-bisimulation is very close to \cite{aldini}, except that we generalise it by using semiring operators.

Even the approach in \cite{continuous} bounds the distance between the transitions of two states: if their distance is less equal than a threshold $\delta$, and this holds for all the states of two processes $P_1$ and $P_2$, such processes are said to be approximately bisimilar with a $\delta$-precision. The motivations is that, interacting with the physical world, exact relationships are restrictive and not robust.

The literature also proposes works using fuzzy weights (in this work we have the fuzzy semiring): in \cite{cao} a notion of behavioural distance is given to measure the behavioural similarity of non-deterministic fuzzy-transition systems. Two systems are at zero distance if and only if they are bisimilar.

Considering the second fragment of the paper, no direct comparison is available for QPMC. Nevertheless, our c-semiring H-M Logic (see Sec.~\ref{sec:logic}) has been inspired by the work in \cite{LlM05}. 
Some examples of quantitative temporal logic are \cite{stoelinga,borto1}. In \cite{stoelinga} the authors present \emph{QLTL}, a quantitative analogue of \emph{LTL} and presents algorithms for Model Checking it over a quantitative version of Kripke structures and Markov chains. Thus, weights are in the interval of Real numbers $[0,1]$. In \cite{borto1} the authors  combine robustness scores with the satisfaction probability to optimise some control parameters of a stochastic model: the goal is to best maximise robustness of the desired specifications. However, even this approach is focused on Continuous-Time Markov Chains, and not on semiring algebraic-structures.

Non-binary measures of security have been considered for access control systems by Cheng et al.~\cite{Cheng:2007:FMS:1263552.1264209}. The level of security should correspond to a fuzzy domain rather than a strict separation between what is secure and what is not.  Zhang et al. define with the BARAC model~\cite{zhang06Policy} a notion of benefit for each access, with the underlying idea that allowing an access comes with a benefit for the system. The ``value'' of an access or an action can be for instance calculated using market-based techniques~\cite{MCR08}.

From a different perspective, 
Bielova and Massacci propose in~\cite{DBLP:conf/essos/BielovaM11} a notion of distance among traces, thus expressing that if a trace is not secure, it should be edited to a secure trace close to the non-secure one, thus characterising enforcement strategies by the distance from the original trace they create. 
In~\cite{DrabikMM12}, a similar notion of cost has been introduced
following some intuitive leads given in~\cite{Martinelli:2012:QQE:2404707.2404711} in order to move from qualitative to quantitative enforcement.
Semirings have been used by Bistarelli et al. in the context of access control~\cite{DBLP:journals/cma/BistarelliMS12} and trust systems~\cite{DBLP:journals/scn/BistarelliFOS10}. Here we use them in the context of enforcement mechanism defined trough process algebra, following the approach by Buchholz and Kemper~\cite{Buchholz}.

\section{Conclusion}\label{sec:conclusion}
We have introduced two different formal-frameworks oriented to the specification of quantitative properties on a GPA-process. Both of the frameworks are have a common \emph{trait d'union} consisting in the use of c-semiring structures to represent transition costs. By taking advantage of such costs, we can constrain classical qualitative-relations between two processes, as we do as our first contribute for trace equivalence and weak bisimulation equivalence. In practice we parametrise the weak bisimulation notion given in \cite{aldini} by allowing for different metrics, and not probability scores only. At the same time we refine the definition of semiring-based bisimulation given in \cite{MiculanP13}, by extending the relation in order to consider $\epsilon$-close 
processes. 
As a second result, we propose
a way to express security constraints via a quantitative version of the Hennessy-Milner logic, and  a method for specifying the security of a system through a quantitative version of PMC, which allows us to move a process from the parallel computation to a formula $\phi$. If the system satisfies a security property with a value $k$ worse than  $t$ (a \emph{security threshold}), then the system is not \emph{quantitatively secure}.
In this way we can use this threshold to tradeoff security and functionality/performance requirements.

The essence of the paper is to advance the same basic bricks  (\ie GPA and semirings) with the purpose to enhance two different quantitative frameworks (\ie process equivalences and PMC), which are  nevertheless related by the common purpose of (security) property specification. Of course both of the frameworks can be independently (but still interlacedly) developed to offer a complete specification and validation tool on their own, as the following ideas on future work suggest.

In the future we aim to extend both the approaches in different directions. As an ongoing work, we are investigating on the definition of the characteristic formula of a processes, with respect to each bisimulation equivalence definitions we have provided in Sec.~\ref{sec:qgndc}. In such way, we will be able to compare the effectiveness of the two proposed approaches.
Furthermore, we aim to extend both of them in order to not only use them for the specification has but also for the analysis. Indeed, referring to the former approach, we need to investigate on the characterisation of the most powerful attacker in order to compare the system under attack, with respect to the expected behaviour. This can be done only under certain constraints on the considered equivalences.
Referring on the latter approach, we need to elaborate a satisfiability procedure for the quantitative logic we have introduced here in order to verify if the system under investigation is secure or not, \ie it satisfies the security requirement.

Another possible direction we would like to investigate is the identification of comparative strategies based on the (partial or total) ordering of the semiring. In this way we can compare different strategies and finally synthesise the best one (whether it exists).
Another direction is the extension of the framework to use more than one measure associated to each action in order to evaluate a process. Such measures can be combined and ordered, \eg by using the lexicographical ordering, in such a way that controlling strategies can be selected with respect to the optimisation of the trade-off between some of them.


%

\bibliographystyle{eptcs}
\bibliography{bibliobis}
\end{document}